\documentclass[sigconf]{aamas} 

\usepackage{balance} % for balancing columns on the final page
\usepackage{amsfonts}
\usepackage{color}
\usepackage{graphicx}
\usepackage{verbatim}
\usepackage{fancyhdr}
\usepackage{amsthm}
\usepackage{stackrel}
\usepackage{caption}
\usepackage{subcaption}
\usepackage{amsmath} % For math environments
\usepackage{xparse}  % For defining commands
\usepackage{bbm}
\usepackage{multicol}
\usepackage{xcolor}

%%% AAMAS-2024 copyright block (do not change!)

\setcopyright{ifaamas}
% \acmConference[AAMAS '24]{Proc.\@ of the 23rd International Conference
% on Autonomous Agents and Multiagent Systems (AAMAS 2024)}{May 6 -- 10, 2024}
% {Auckland, New Zealand}{N.~Alechina, V.~Dignum, M.~Dastani, J.S.~Sichman (eds.)}

\acmConference[OptLearnMAS-24]{Appears at the 15th Workshop on Optimization and Learning in Multiagent Systems Workshop (OptLearnMAS 2024). Held as part of the Workshops at the 23rd International Conference on Autonomous Agents and Multiagent Systems.}{May 6 -- 10, 2024}{Auckland, New Zealand}{N.~Alechina, V.~Dignum, M.~Dastani, J.S.~Sichman (eds.)}

\copyrightyear{2024}
\acmYear{2024}
\acmDOI{}
\acmPrice{}
\acmISBN{}

%%% == IMPORTANT ==
%%% Use this command to specify your EasyChair submission number.
%%% In anonymous mode, it will be printed on the first page.

\acmSubmissionID{<<EasyChair submission id>>}

\title[Satisfaction and Regret in Stackelberg Games]{Satisfaction and Regret in Stackelberg Games}
\author{Langford B White}
\affiliation{
  \institution{The University of Adelaide}
  \city{Adelaide}
  \country{Australia}}
\email{lang.white@adelaide.edu.au}

\author{Duong D Nguyen}
\affiliation{
  \institution{The University of Adelaide}
  \city{Adelaide}
  \country{Australia}}
\email{duong.nguyen@adelaide.edu.au}

\author{Hung X Nguyen}
\affiliation{
  \institution{The University of Adelaide}
  \city{Adelaide}
  \country{Australia}}
\email{hung.nguyen@adelaide.edu.au}

%%%%%%%%%%%%%%%%%%%%%%%%%%%%%%%%%%%%%%%%%%%%%%%%%%%%%%%%%%%%%%%%%%%%%%%%

%%% Include any author-defined commands here.
\newcommand{\BibTeX}{\rm B\kern-.05em{\sc i\kern-.025em b}\kern-.08em\TeX}

\newcommand{\Bc}{\ensuremath{\mathcal{B}}}

\newcommand{\Sc}{\ensuremath{\mathcal{S}}}

\renewcommand{\Pr}{\ensuremath{\mathbf{P}}}

\newcommand{\Ibb}{\ensuremath{\mathbb{I}}}

\newcommand{\Er}{\mathbf{E}}

\newcommand{\lb}{\ensuremath{\left(}}
\newcommand{\rb}{\ensuremath{\right)}}
\newcommand{\lbr}{\ensuremath{\left\{}}
\newcommand{\rbr}{\ensuremath{\right\}}}

\newcommand{\ra}{\ensuremath{\rightarrow}}

\newcommand{\bcr}{\begin{color}{red}}
\newcommand{\bcb}{\begin{color}{blue}}
\newcommand{\bcg}{\begin{color}{green}}
\newcommand{\ec}{\end{color}}

\newtheorem{theorem}{Theorem}

\newtheorem{proposition}{Proposition}

%%%%%%%%%%%%%%%%%%%%%%%%%%%%%%%%%%%%%%%%%%%%%%%%%%%%%%%%%%%%%%%%%%%%%%%%

%------------------------------------------------------------------------------
\begin{abstract}
This paper introduces the new concept of (follower) {\it satisfaction} in Stackelberg games and compares the standard Stackelberg game with its satisfaction version. Simulation results are presented which suggest that the follower adopting satisfaction generally increases leader utility. This important new result is proven for the case where leader strategies to commit to are restricted to be deterministic (pure strategies). The paper then addresses the application of regret based algorithms to the Stackelberg problem. Although it is known that the follower adopts a no-regret position in a Stackelberg solution, this is not generally the case for the leader. The report examines the convergence behaviour of unconditional and conditional regret matching (RM) algorithms in the Stackelberg setting. The paper shows that, in the examples considered, that these algorithms either converge to any pure Nash equilibria for the simultaneous move game, or to some mixed strategies which do not have the ``no-regret'' property. In one case, convergence of the conditional RM algorithm over both players to a solution ``close'' to the Stackelberg case was observed. The paper argues that further research in this area, in particular when applied in the satisfaction setting could be fruitful. \\

\end{abstract}

\keywords{Stackelberg games, Satisfaction Model, Regret Matching}

\begin{document}

\pagestyle{fancy}
\fancyhead{}

\maketitle

%------------------------------------------------------------------------------
\section{Introduction}

This paper considers a set of game-theoretic models in which there is a player that is able to benefit from moving first. Such a player may {\it commit} to a strategy before the game is played. 
The two-player case is a well-studied problem in economics due to von Stackelberg \cite{Stack}, and is often known as a {\it Stackelberg duopoly}. The Stackelberg duopoly is a model of two firms competing on quantity (Cournot competition) where one firm is dominant (i.e. the ``market leader''). The dominant player moves first and is called {\it the leader} in game theory. The other player is called {\it the follower}.  Mixed (randomised) strategies are permitted for the leader and generally advantageous. Since it is maximising a linear function, the follower always plays a pure (deterministic) strategy (but this may not be unique). The model assumes that the follower observes the leader's action, and plays best response (BR). The leader knows that this is the follower's strategy. The leader knows the complete game structure, whereas the follower only need to know its own utility function (and associated action spaces). \footnote{More precisely, the leader only requires knowledge of its own utilities and the BR function for the follower. This is important when the leader is trying to learn the follower behaviour in order to construct its optimal policy to commit to.}\\

Stackelberg game models provides valuable insights into strategic decision-making, equilibrium analysis, and policy design in complex systems. Thus, the Stackelberg framework finds applications in various real-world domains~\cite{Harr24}. Examples include pricing strategies, cybersecurity, traffic management, and healthcare. We are interested in Stackelberg game models in wargaming (e.g. \cite{LBW_AIWG_1}) because we can think of there being two coalitions (the players) who take turns in a strict ordering. The familiar ``red team - blue team'' scenario is an example, where one team moves first. Indeed, it's possible to think of the Stackelberg duopoly as situation where the leader can commit to its strategy knowing that the follower must play BR. So, for example, if the blue team is dominant, then it can ``design'' its (mixed) strategy knowing that it can attain a certain utility irrespective of what the red team does. Also, the there is no incentive for the red team to deviate from BR - it will suffer a loss in its utility by doing so. This kind of idea has been applied to the problem of protecting infrastructure in \cite{Brown,Pita09b}, for example.\\

The focus of this paper is the idea of {\it satisfaction} in Stackelberg games. Satisfaction games arise when rather than seeking to maximise their utilities, agents seek to achieve a certain minimum level of utility (e.g. \cite{GPYA17}). Such an agent is then said to be satisfied. Indeed, the concept of satisfaction can be generalised to encode a number of set theoretic constraints which specify ``satisfaction'' when met \cite{GPYA17}. In \cite{WRN20}, the authors have developed the concept of  {\it Bayesian rationality} in satisfaction games. In this setting, agents seek to maximise their probability of satisfaction, even if they are unable to attain satisfaction with probability one. In the context of a Stackelberg duopoly, we can consider either the follower, or the leader (or both) as seeking satisfaction and we'll see there is a natural interpretation in terms of probability of satisfaction which relates to the Bayesian concept of information in strategic games (e.g. \cite{Aum87}).\\

The outline of the paper is as follows. In sec. \ref{sec:Stack_Sat}, we introduce the new idea of Bayesian satisfaction in a Stackelberg duopoly. In this paper, we only consider satisfaction as expressed in terms of achieving a certain specified minimum utility. However, we point out the much more general, and potentially more widely applicable formulation in terms of {\it correspondence functions} \cite{GPYA17}. Section \ref{sec:Stack_Sat} revisits the linear programming (LP) solutions \cite{Con06,Con11} for standard Stackelberg games in the context of satisfaction. We prove a result which allows ``pruning'' of the set of possible follower best-responses based on a satisfaction domination concept. This result also yields an initial basic feasible solution for the multiple LP case, which might help improve the efficiency of LP solution methodologies when specialised to the satisfaction model. Simulations examples show that it appears to be the case that when the follower adopts satisfaction as its reward setting, optimal leader utility is higher than for the standard Stackelberg case. In the case where leader strategies are restricted to be pure actions (generally sub-optimal \cite{Con06}), we prove that this is indeed the case. This appears to be a significant new finding with important ramifications for game design more generally. In sec. \ref{sec:regret}, we consider the applicability of regret based adaptive procedures (e.g. \cite{Hart00,Hart05}) to the Stackelberg model. We know from \cite{Con11}, that the Stackelberg solution has the ``no-regret'' property for the follower, but not necessarily for the leader. A very recent study~\cite{Harr24} suggests that ``no-regret" is impossible for the leader when both follower and the context (of which the leader commits to a strategy based on it) are driven by an adversary, but possible when either one of them is chosen stochastically. This somewhat aligns with our finding which shows that the leader's utility can be higher than the standard Stackelberg case when the follower adopts satisfaction as its reward setting (instead of playing BR strategy). The authors in~\cite{Harr24}, however, study the online repeated game setting in which extra side information is availalbe to both players, while our work examines a one-off game and do not assume any additional formation. In addition, we focus on the new concept of (follower) satisfaction in Stackelberg games and on the application of regret-based algorithms to find the optimal strategy for the Stackelberg problem. Simulations are used to show that the regret-based methods considered here generally converge to pure Nash equilibria if they exist, or may not converge at all ! Interestingly, one example showed convergence of the conditional regret matching scheme of \cite{Hart00} ``near to'' the Stackelberg solution. Further study of these approaches is required. 
\section{Stackelberg Satisfaction Games\label{sec:Stack_Sat}}

We assume here that the reader has a basic understanding of basic game theory concepts and notation, as well as Stackelberg games in the standard (i.e. utility maximisation) setting (see e.g. \cite{LBW_AIWG_1}). In this section, we shall introduce the idea of a Stackelberg satisfaction game (SSG). In the first instance, we'll address the case where the follower agent seeks satisfaction whereas the leader still seeks to compute a strategy to commit to which maximises its utility. We'll consider here the case where satisfaction for the follower is expressed in terms of achieving a minimum value of utility $U^f_-$. Suppose the leader plays a mixed strategy $\pi_\ell$, then the follower best-response function $\Bc : \Delta (\Sc^\ell ) \ra 2^{\Sc^f}$, is given by
\begin{align}
\Bc(\pi_\ell) & = \ \stackrel[t \in \Sc^f]{}{\text{argmax}} \ \sum_{s \in \Sc^\ell} \ \pi_\ell(s) \ \Ibb \lbr U^f \lb s, t \rb \geq U^f_- \rbr \nonumber \\
& =  \ \stackrel[t \in \Sc^f]{}{\text{argmax}} \ \Pr_{\pi_\ell} \lbr U^f(.,t) \geq U^f_- \rbr \ ,
\label{eq:BR1}
\end{align} 
where $\Ibb\{.\}$ denotes the indicator function. Note that $\Bc$ is generally set-valued. Observe that the summation on the r.h.s. of \eqref{eq:BR1} is the probability (under $\pi_\ell$) that the follower is satisfied when it chooses action $t$. Thus the best-response to a (mixed) leader strategy $\pi_\ell$ is one which maximises the follower's probability of satisfaction. \\

The leader then chooses the mixed strategy
\begin{align}
\pi_\ell^* & = \ \stackrel[\pi \in \Delta(\Sc^\ell)]{}{\text{argmax}} \ \sum_{s \in \Sc^\ell} \ \pi(s) \ \stackrel[t \in \Bc(\pi)]{}{\text{max}} \ U^\ell(s,t) \nonumber  \\
& =  \ \stackrel[\pi \in \Delta(\Sc^\ell)]{}{\text{argmax}} \ \Er_\pi \lbr \stackrel[t \in \Bc(\pi)]{}{\text{max}} \ U^\ell(.,t)  \rbr \ .
\label{eq:lead_act1}
\end{align}
This is the optimal mixed strategy to commit to. Ties are resolved in favour of the leader in \eqref{eq:lead_act1}.\\

{\it Comment : }
%\vspace{-3mm} 
We could consider satisfaction from the point of view of the leader, but the presence of ties in the BR function for the follower would appear to prevent a simple interpretation in terms of probability of leader satisfaction. 

In the sequel, we'll consider only the case for follower satisfaction. 

\subsection{Solution via Multiple Linear Programs\label{ssec:mult_LP}}

In \cite{Con06}, a solution strategy for the standard Stackelberg game using multiple linear programs (LP) is proposed. The idea is to construct a LP for each follower action where the constraints in each LP enforces the BR property for that follower action. So, in the standard setting, we have the LP for each follower action $s^f \in \Sc^f$, 
\begin{equation*}
\stackrel[\pi \in \Delta(\Sc^\ell)]{}{\text{max}} \ \Er_\pi \lbr U^\ell(.,s^f) \rbr %\nonumber \\
\end{equation*}
\begin{align}
\text{subject\ to\ :}  \ \ \forall \, t \in \Sc^f, \ \sum_{s \in \Sc^\ell} \pi(s) \ \lb U^f(s,s^f) - U^f(s,t) \rb \geq 0 \ ,
\label{eq:LP1}
\end{align}
together with the usual constraints on $\pi$ being a pmf. The objective function is the leader utility given the follower plays $s^f$, whilst the constraints enforce the property that $s^f$ is a BR to leader mixed action $\pi$. As pointed out in \cite{Con06}, not all these LPs are feasible, but at least one will be. Choose that feasible LP with maximum objective function (and solution $\pi_\ell^*$) indexed by $s^f = s_*^f$, to yield the optimal strategy pair $(\pi_\ell^*, s_*^f)$. \\

In the context of follower satisfaction, the BR constraints in the LP indexed by $s^f$ become \footnote{We reverse the inequality to maintain consistency with linear programme implementations such as matlab's \texttt{linprog} function.}
\begin{align}
\forall \, t \in \Sc^f, \ \sum_{s \in \Sc^\ell} \pi(s) \ \lb \Ibb \lbr U^f(s,t) \geq U^f_- \rbr - \Ibb \lbr U^f(s,s^f) \geq U^f_- \rbr \rb \leq 0 \ .
\label{eq:LP2}
\end{align}
 The coefficients in the matrix specifying these constraints contains only the values $0, \pm 1$. More specifically, the coefficient (corresponding to the pair $(t,s)$) is zero if the follower is satisfied under both $(s,t)$ and $(s,s^f)$ or is not satisfied under either of these action pairs. The coefficient will be minus one if the follower is satisfied under $(s,s^f)$ but not under $(s,t)$. Conversely, the coefficient will be plus one  if the follower is satisfied under $(s,t)$ but not under $(s,s^f)$. Several observations can be made:
 
 \begin{itemize}
 \item If either $U_-^f \ra -\infty$ (i.e. the follower is always satisfied) or $U_-^f \ra +\infty$ (i.e. the follower is never satisfied), the optimal solution is the pure strategy pair $(s^\ell_*, s^f_*)$ maximising $U^\ell$ (if this maximum is unique). This occurs because, in each case, the follower BR function $\Bc(\pi) = \Sc^f$. All the constraint matrices for each LP in \eqref{eq:LP2}, are zero (the indicator functions are either both zero or both one), so the constraints are automatically met with equality. Thus the LP indexed by $s^f$ yields the pure leader action maximising $U^\ell(.,s^f)$. \\
 
 \item Due to the specific nature of the constraints \eqref{eq:LP2}, the LP solution (in terms of basic feasible solutions) can be found more easily. Indeed, let write \eqref{eq:LP2} as $A(s^f) \, \pi \leq 0$, where $A$ is $N^f \times N^\ell$ with elements
\begin{align*}
\left[ A(s^f) \right]_{t,s} & = \Ibb \lbr U^f(s,t) \geq U^f_- \rbr - \Ibb \lbr U^f(s,s^f) \geq U^f_- \rbr \ .
\end{align*}
Appending $N^f$ slack variables $y$ yields the standard form
\begin{equation*}
\max_{\pi} \ U^\ell(.,s^f)^T \ \pi 
\end{equation*}
\begin{align*}
\text{subject\ to:} \quad
\begin{cases}
A \, \pi + y = 0 \\
\mathbf{1}^T \, \pi = 1 \\
\pi,y \geq 0 \ .
\end{cases}
\end{align*}
Now, suppose there is a column $a_k$ of $A$ which contains no $+1$ entries. Then $\pi = e_k, y = -a_k$ is a basic feasible solution (bfs) to the LP (here $e_k$ denotes the unit vector with a one in entry $k$ and zeros elsewhere and $\mathbf{1}$ is the vector of all ones). In the case where no such column can be found, then every column of $A$ as at least one entry that is $+1$. Thus for every $s \in \Sc^\ell$, there is a $t \in \Sc^f$ with $A(t,s) = 1$. This can only occur if $U^f(s,s^f) < U^f_-$ for all $s \in \Sc^\ell$, i.e. the follower is never satisfied playing action $s_f$.  A consequence of this is that \eqref{eq:LP2} becomes
\begin{align*}
\forall \, t \in \Sc^f, \ \sum_{s \in \Sc^\ell} \pi(s) \ \Ibb \lbr U^f(s,t) \geq U^f_- \rbr  \leq 0 \ .
\end{align*} 
So for any $(s,t)$ with $U^f(s,t) \geq U^f_- $, (i.e. there is a +1 entry in column $s$ of $A$), then $\pi(s) = 0$. However, following this argument, every column of $A$ has a +1 entry so $\pi(s) = 0$ for all $s$, which violates the equality constraint. So this LP indexed by $s^f$ is infeasible and thus can be ignored. This is unsurprising since such a follower action $s^f$ is dominated by all others. We thus have the following theoretical results. 

\end{itemize}

\begin{theorem}\label{thrm_1}
If the constraint matrix appearing in the LP indexed by follower BR $s^f$ has at least one +1 entry in every column, then that LP is infeasible.
\end{theorem}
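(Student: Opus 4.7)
The plan is to derive a contradiction: assume $\pi$ is a feasible point of the LP indexed by $s^f$, and show that under the structural hypothesis the constraints force $\pi(s) = 0$ for every $s \in \Sc^\ell$, violating the simplex constraint $\mathbf{1}^T \pi = 1$.

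First I would unpack what a $+1$ entry encodes. Since $[A(s^f)]_{t,s} = \Ibb\lbr U^f(s,t) \geq U^f_- \rbr - \Ibb\lbr U^f(s,s^f) \geq U^f_- \rbr$, a $+1$ at row $t$, column $s$ means $U^f(s,t) \geq U^f_-$ while $U^f(s,s^f) < U^f_-$. If every column $s$ carries some $+1$ entry, then in particular $U^f(s,s^f) < U^f_-$ for every $s \in \Sc^\ell$, so the second indicator in \eqref{eq:LP2} is identically zero. The constraint therefore collapses to $\sum_{s \in \Sc^\ell} \pi(s)\, \Ibb\lbr U^f(s,t) \geq U^f_- \rbr \leq 0$ for each $t \in \Sc^f$.

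Next I would exploit nonnegativity. Because $\pi(s) \geq 0$ and every indicator lies in $\{0,1\}$, each summand above is nonnegative, so each inequality holds with equality, which forces $\pi(s)\, \Ibb\lbr U^f(s,t) \geq U^f_- \rbr = 0$ for every pair $(s,t)$. Invoking the hypothesis a second time, for every $s$ there is a witness row $t(s)$ with $\Ibb\lbr U^f(s,t(s)) \geq U^f_- \rbr = 1$; combined with the identity just obtained, this yields $\pi(s) = 0$. Applying this for all $s \in \Sc^\ell$ gives $\pi \equiv 0$, contradicting $\mathbf{1}^T \pi = 1$ and proving infeasibility.

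I do not anticipate a genuine obstacle, since the argument is essentially a direct decoding of the $\{0,\pm 1\}$-valued constraint matrix. The only subtlety worth flagging is that the hypothesis must be used twice, in two logically different roles: once \emph{globally} (the absence of a $+1$ is ruled out in every column, so $s^f$ is never satisfying, killing the second indicator), and once \emph{columnwise} (extracting a saturating row $t(s)$ separately for each $s$, since the witnesses need not line up to a common $t$). This is the reason the result takes the "every column" form rather than a weaker "some column" form.
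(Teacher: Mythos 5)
Your proposal is correct and follows essentially the same route as the paper's own argument: both note that a $+1$ in every column forces $U^f(s,s^f) < U^f_-$ for all $s$, collapse the constraint \eqref{eq:LP2} to $\sum_{s}\pi(s)\,\Ibb\lbr U^f(s,t)\geq U^f_-\rbr\leq 0$, and then use nonnegativity plus a columnwise witness row to conclude $\pi\equiv 0$, contradicting $\mathbf{1}^T\pi=1$. Your explicit remark that the hypothesis is used twice (globally and columnwise) is a useful clarification of a point the paper leaves implicit, but the underlying argument is the same.
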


{\it Comment :}
%\vspace{-3mm}
The converse of the above result also holds, because, as described above, if there is a column of $A$ which has all its elements being non-positive, then a bfs can always be constructed. \\
 
Conceptually, it would appear that the leader could attain a higher utility in a satisfaction game because the best responses of the follower are less constrained. Indeed, we have :

\begin{proposition}
In a Stackelberg satisfaction game, the expected leader utility is never lower than that attained in the standard Stackelberg game with the same utilities.
\end{proposition}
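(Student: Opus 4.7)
The plan is to prove the claim in the restricted setting flagged in the abstract, where the leader commits to a \emph{pure} action. The key observation is that replacing utility maximisation with satisfaction weakly enlarges the follower's best-response set at every pure leader action, and the tie-breaking rule following \eqref{eq:lead_act1} then guarantees the leader at least as much utility.

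First, I would specialise the follower best-response function \eqref{eq:BR1} to a Dirac leader strategy $\pi_\ell = \delta_{s^\ell}$: in the standard game, $\Bc_{\mathrm{std}}(s^\ell) = \argmax_{t \in \Sc^f} U^f(s^\ell, t)$, whereas in the satisfaction game the r.h.s.\ of \eqref{eq:BR1} reduces to $\argmax_{t \in \Sc^f} \Ibb \lbr U^f(s^\ell, t) \geq U^f_- \rbr$, which equals $\{t \in \Sc^f : U^f(s^\ell, t) \geq U^f_-\}$ when that set is non-empty, and the full action set $\Sc^f$ otherwise (since the indicator is then identically zero). Next, I would verify the inclusion $\Bc_{\mathrm{std}}(s^\ell) \subseteq \Bc_{\mathrm{sat}}(s^\ell)$ in both cases: when $\max_t U^f(s^\ell, t) \geq U^f_-$, any maximiser of $U^f(s^\ell, \cdot)$ already meets the threshold and so lies in $\Bc_{\mathrm{sat}}(s^\ell)$; when $\max_t U^f(s^\ell, t) < U^f_-$, the inclusion is trivial because $\Bc_{\mathrm{sat}}(s^\ell) = \Sc^f$. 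Applying the tie-breaking rule from \eqref{eq:lead_act1}, the leader's utility at pure action $s^\ell$ is $\max_{t \in \Bc(s^\ell)} U^\ell(s^\ell, t)$, and the set inclusion gives the pointwise bound $\max_{t \in \Bc_{\mathrm{std}}(s^\ell)} U^\ell(s^\ell, t) \leq \max_{t \in \Bc_{\mathrm{sat}}(s^\ell)} U^\ell(s^\ell, t)$. Taking the maximum over $s^\ell \in \Sc^\ell$ then transfers the inequality to the two optimal pure-strategy leader utilities.

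The main obstacle is the extension to arbitrary mixed leader commitments. For a non-degenerate $\pi_\ell$, the standard BR depends on expected follower utility, while the satisfaction BR depends on the probability that the random utility clears the threshold $U^f_-$, and these two correspondences are no longer comparable by a simple pointwise set inclusion: enlarging the satisfaction-feasible set at each pure action does not produce any clean ordering between the feasible regions of the multiple LPs of Section \ref{ssec:mult_LP}, because different follower actions can become BRs against different mixed $\pi_\ell$ in the two regimes. This is presumably why the authors restrict the proven result to pure leader strategies and fall back on simulation in the general mixed setting.
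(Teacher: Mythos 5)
Your proposal is correct and follows essentially the same route as the paper: the paper itself never proves the proposition in its full mixed-strategy generality (that case is supported only by the simulations of sec.~\ref{ssec:examp} and is described elsewhere as a conjecture), and the proof it does supply --- Theorem~\ref{thrm_2}, the pure-strategy restriction --- is exactly your argument, namely that for every pure leader action the utility-maximising best-response set is contained in the satisfaction best-response set (with $\Bc_{\mathrm{sat}}(s^\ell)=\Sc^f$ when no follower action clears the threshold), after which leader-favourable tie-breaking and a maximum over $s^\ell$ yield the inequality. If anything your per-$s^\ell$ case split is cleaner than the paper's, which phrases the case distinction globally (``there is at least one pair of actions\dots'') even though the inclusion must be checked leader action by leader action, and your closing remark on why the argument does not extend to mixed commitments matches the authors' own scoping of the result.
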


We can prove a more restrictive version of Proposition 1 if we restrict leader strategies to be deterministic, i.e. the optimal {\it pure} strategy to commit to :

\begin{theorem}\label{thrm_2}
In a Stackelberg game restricted to deterministic (pure) strategies, the leader's utility in a satisfaction version is never lower than that for the standard utility-based game.
\end{theorem}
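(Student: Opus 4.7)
My plan is to reduce Theorem \ref{thrm_2} to a pointwise set-containment claim about follower best responses under pure leader commitments. Specifically, I would show that, for every pure leader action $s \in \Sc^\ell$, the standard BR set $\Bc_{\text{std}}(s) = \argmax_{t \in \Sc^f} U^f(s,t)$ is contained in the satisfaction BR set $\Bc_{\text{sat}}(s) = \argmax_{t \in \Sc^f} \Ibb\{U^f(s,t) \geq U^f_-\}$. Once this containment holds, the leader-favourable tie-breaking rule in \eqref{eq:lead_act1} immediately yields the desired inequality.

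The containment argument splits into two cases according to whether the follower can meet the satisfaction threshold. Fix $s \in \Sc^\ell$ and let $t^\star \in \Bc_{\text{std}}(s)$. In the first case, $U^f(s,t^\star) \geq U^f_-$; then the indicator for $t^\star$ equals one, which is the maximum possible value of the indicator, so $t^\star \in \Bc_{\text{sat}}(s)$. In the second case, $U^f(s,t^\star) < U^f_-$; but since $t^\star$ maximises $U^f(s,\cdot)$, this forces $U^f(s,t) < U^f_-$ for every $t \in \Sc^f$, so the indicator is identically zero and $\Bc_{\text{sat}}(s) = \Sc^f \supseteq \Bc_{\text{std}}(s)$. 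Either way, $\Bc_{\text{std}}(s) \subseteq \Bc_{\text{sat}}(s)$.

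Given the containment, denote the per-commitment leader payoffs by
\begin{align*}
V_{\text{std}}(s) &= \max_{t \in \Bc_{\text{std}}(s)} U^\ell(s,t), \qquad V_{\text{sat}}(s) = \max_{t \in \Bc_{\text{sat}}(s)} U^\ell(s,t).
\end{align*}
Since the maximum of $U^\ell(s,\cdot)$ over a larger set is weakly larger, $V_{\text{sat}}(s) \geq V_{\text{std}}(s)$ for every $s \in \Sc^\ell$. Taking the maximum over pure leader commitments on both sides, $\max_{s} V_{\text{sat}}(s) \geq \max_{s} V_{\text{std}}(s)$, which is exactly the statement of the theorem.

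The conceptual hurdle is really only the second case in the containment argument, where one must recognise that when the maximiser $t^\star$ fails to satisfy, no action can satisfy, so the entire follower action set becomes indifferent under the indicator objective. A secondary point to flag explicitly is the role of the leader-favourable tie-breaking convention from \eqref{eq:lead_act1}: without it, an expanded follower BR set could in principle \emph{hurt} the leader via adversarial tie-breaking, and the inequality would fail. Finally, I would note that this pure-strategy argument does not immediately generalise to Proposition 1 because the BR correspondence $\Bc(\pi_\ell)$ depends on the full mixture, so the simple pointwise containment no longer drives the comparison.
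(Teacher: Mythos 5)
Your proof is correct and follows essentially the same route as the paper: establish that the satisfaction best-response set contains the standard one, then invoke the leader-favourable tie-breaking so that the maximum of $U^\ell$ over the larger set is weakly larger. If anything, your per-leader-action case split (either the standard maximiser meets the threshold, or no action does and the satisfaction BR set collapses to all of $\Sc^f$) is a slightly cleaner rendering of the paper's argument, which phrases the existence of a satisfying pair globally rather than for each fixed $s^\ell$.
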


\begin{proof}
Consider the forms of the BR function for the follower for pure leader strategies in the standard and satisfaction models :
\begin{align*}
\Bc(s^\ell) & = \ \stackrel[s^f \in \Sc^f]{}{\text{argmax}} \ U^f \lb s^\ell, s^f \rb \ , \\
\Bc_\text{sat}(s^\ell ; U_-^f) & = \ \stackrel[s^f \in \Sc^f]{}{\text{argmax}} \ \mathbb{I} \lbr U^f \lb s^\ell, s^f \rb \geq U_-^f \rbr \ , 
\end{align*}
both of which may are generally set-valued. Define the following subsets of $\Sc^\ell \times \Sc^f$
\begin{align}
\Gamma_\text{max} & = \lbr (s^\ell, s^f) : s^f \in \Bc(s^\ell) \rbr\ , \nonumber \\
\Gamma_\text{sat}(U_-^f) & = \lbr (s^\ell,s^f) : s^f \in \Bc_\text{sat}(s^\ell ; U_-^f )\rbr \ .
\label{eq:B_sets}
\end{align}
When $U_-^f \ra - \infty$, then $\Gamma_\text{sat}(U_-^f) = \Sc^f$ because the indicator functions in \eqref{eq:B_sets} take the value one for any pair $(s^\ell,s^f)$. As $U_-^f$ is increased, for each $s^\ell$, the number of follower actions yielding satisfaction decreases, but as long as there is at least one such action giving satisfaction, the maximum remains one. For each given $s^\ell$, if $U_-^f$ sufficiently large, there will be no follower actions such that $(s^\ell,s^f)$ yields satisfaction, so all indicator functions take the value zero, so that again $\Gamma_\text{sat}(U_-^f) = \Sc^f$. Firstly assume that, given $U_-^f$, there is at least one pair of actions $(s^\ell,s^f)$ for which $U^f \lb s^\ell, s^f \rb \geq U_-^f $. Suppose $(s_*^\ell,s_*^f) \in \Gamma_\text{max}$, then it must then hold that $U^f \lb s_*^\ell, s_*^f \rb \geq U_-^f $, i.e. $(s_*^\ell,s_*^f) \in \Gamma_\text{sat}(U_-^f)$. In the case where there are no pair of actions $(s^\ell,s^f)$ for which $U^f \lb s^\ell, s^f \rb \geq U_-^f $, then $\Gamma_\text{sat}(U_-^f) = \Sc^f$. In each case, it therefore holds that $\Gamma_\text{max}\subseteq \Gamma_\text{sat}(U_-^f)$. The maximum leader utilities for the standard and satisfaction game are \footnote{Note that this formulation resolves ties in the BR functions in favour of the leader as required in the Stackelberg solutions.}
\begin{align*}
U_\text{max}^\ell & = \max \lbr U^\ell(s^\ell,s^f) : (s^\ell,s^f) \in \Gamma_\text{max} \rbr \\
U_\text{sat}^\ell(U_-^f) & = \max \lbr U^\ell(s^\ell,s^f) : (s^\ell,s^f) \in \Gamma_\text{sat}(U_-^f) \rbr \ ,
\end{align*}
so it follows that $U_\text{max}^\ell \leq U_\text{sat}^\ell(U_-^f)$. The above argument holds for any satisfaction threshold $U_-^f$, hence the result is proven.
\end{proof}

\subsection{Examples\label{ssec:examp}}

In this section, we use some numerical examples to illustrate the properties of the satisfaction formulation of the Stackelberg game. In particular, we'll consider the behaviour of the solutions as the follower utility bound is varied, and characterise the leader utility compared to the standard Stackelberg formulation. \\

In this example, there are $N^\ell$ leader actions, and $N^f$ follower actions. The leader and follower utilities were randomly generated for each round, with individual utilities i.i.d. uniform $(0,1)$ random variables. Thus the maximum utility attainable for the leader is a random variable $Y$ distributed according to $F_Y(y) = y^{N^f  N^\ell}$ supported on $(0,1)$. 

\begin{itemize}
    \item In the first simulation, we used $N^\ell = 10, N^f = 5$ with 40 uniformly spaced satisfaction levels. We averaged 1000 independent trials to obtain the results shown in fig. \ref{fig:sim1}. 
    \item In the second simulation, we used $N^\ell = 10, N^f = 10$ with 20 uniformly spaced satisfaction levels. We averaged 500 independent trials to obtain the results shown in fig. \ref{fig:sim2}. 
    \item In the third simulation, we used $N^\ell = 10, N^f = 15$ with 20 uniformly spaced satisfaction levels. We averaged 500 independent trials to obtain the results shown in fig. \ref{fig:sim3}. 
    \item In the fourth simulation, we used $N^\ell = 5, N^f = 5$ with 20 uniformly spaced satisfaction levels. We averaged 1000 independent trials to obtain the results shown in fig. \ref{fig:sim5}.
    \item In the fourth simulation, we used $N^\ell = 5, N^f = 10$ with 20 uniformly spaced satisfaction levels. We averaged 1000 independent trials to obtain the results shown in fig. \ref{fig:sim4}.
    \item In the fourth simulation, we used $N^\ell = 5, N^f = 15$ with 20 uniformly spaced satisfaction levels. We averaged 1000 independent trials to obtain the results shown in fig. \ref{fig:sim6}. 
\end{itemize}

\begin{figure*}[!t] 
        \centering
        \begin{subfigure}[b]{0.5\linewidth}
                \includegraphics[scale=0.30]{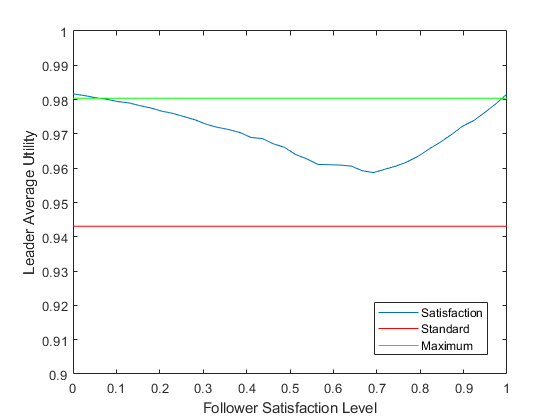}
                \includegraphics[scale=0.30]{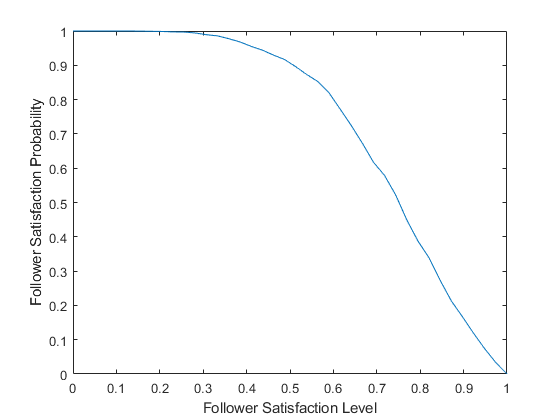}
                \caption{Simulation 1: The maximum possible leader utility of 0.980.\\ Here, $N^\ell = 10, N^f = 5$. \label{fig:sim1}}
        \end{subfigure}%
        \begin{subfigure}[b]{0.5\linewidth}
                \includegraphics[scale=0.30]{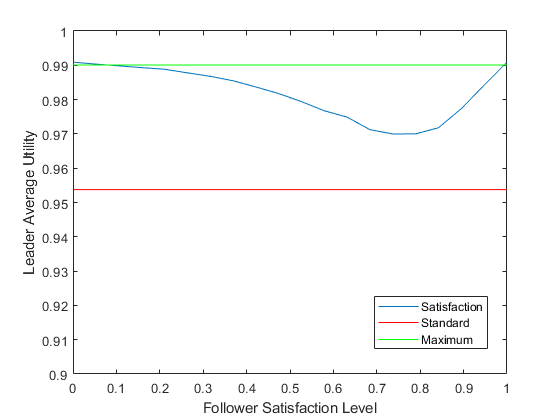}
                \includegraphics[scale=0.30]{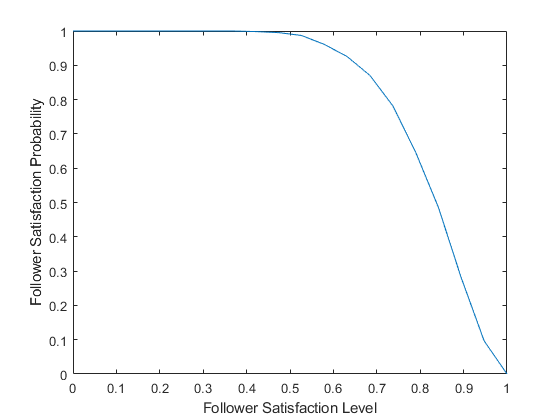}
                \caption{Simulation 2: The maximum possible leader utility of 0.990.\\ Here $N^\ell = 10, N^f = 10$. \label{fig:sim2}}
        \end{subfigure}\\
                \begin{subfigure}[b]{0.5\linewidth}
                \includegraphics[scale=0.30]{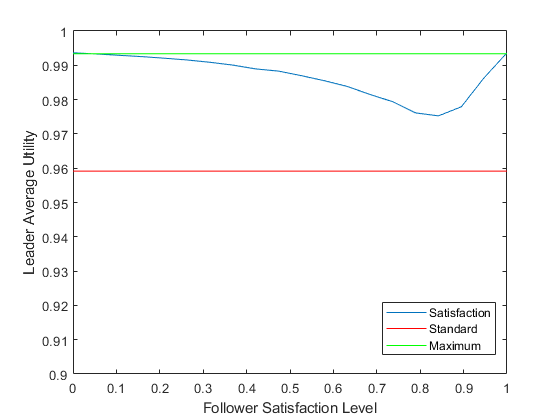}
                \includegraphics[scale=0.30]{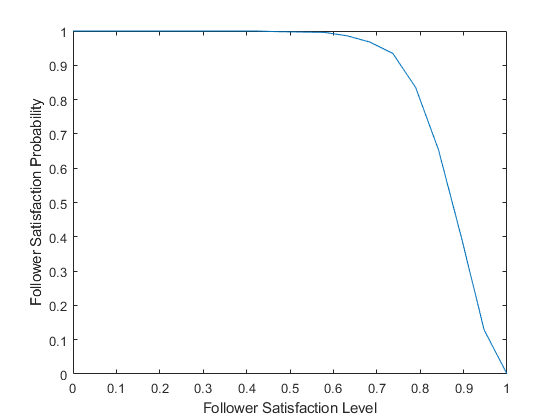}
                \caption{Simulation 3: The maximum possible leader utility of 0.993.\\ Here $N^\ell = 10, N^f = 15$. \label{fig:sim3}}
        \end{subfigure}%
        \begin{subfigure}[b]{0.5\linewidth}
                \includegraphics[scale=0.30]{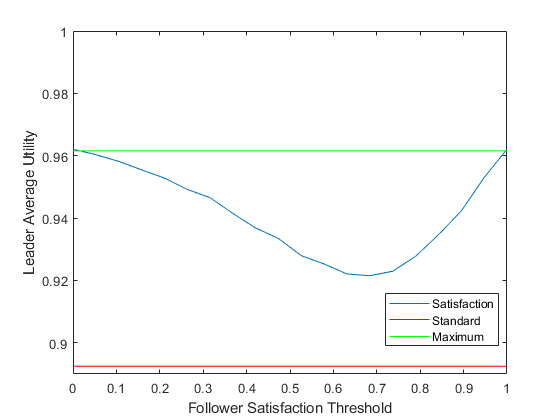}
                \includegraphics[scale=0.30]{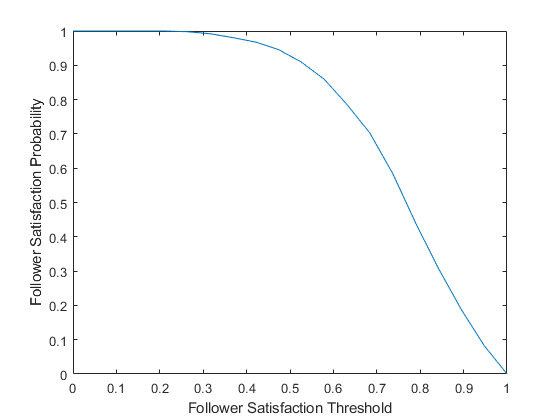}
                \caption{Simulation 4: The maximum possible leader utility of 0.962.\\ Here $N^\ell = 5, N^f = 5$. \label{fig:sim5}}
        \end{subfigure}\\
                \begin{subfigure}[b]{0.5\linewidth}
                \includegraphics[scale=0.30]{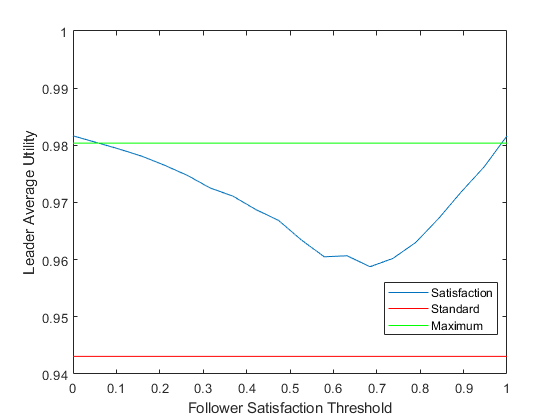}
                \includegraphics[scale=0.30]{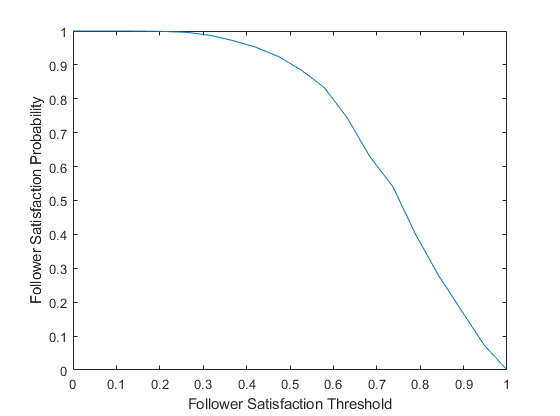}
                \caption{Simulation 5: The maximum possible leader utility of 0.980.\\ Here $N^\ell = 10, N^f = 5$. \label{fig:sim4}}
        \end{subfigure}%
        \begin{subfigure}[b]{0.5\linewidth}
                \includegraphics[scale=0.30]{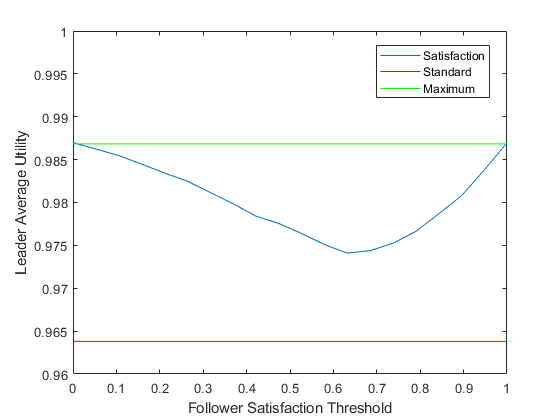}
                \includegraphics[scale=0.30]{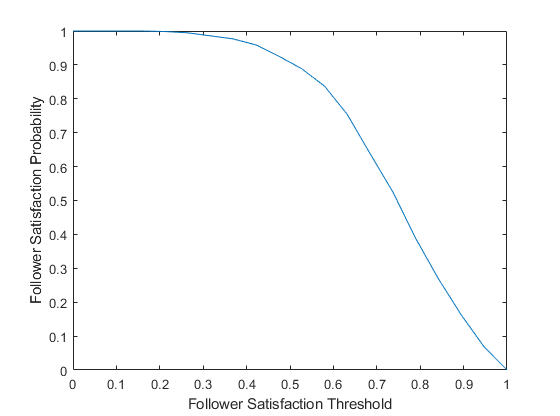}
                \caption{Simulation 6: The maximum possible leader utility of 0.987.\\ Here $N^\ell = 15, N^f = 5$. \label{fig:sim6}}
        \end{subfigure}%
        \caption{The left figure in each subplot shows the leader utilities for the standard Stackelberg duopoly that attained for the satisfaction version. The right figure in each subplot shows the corresponding follower satisfaction probability. The horizontal axis is the follower utility threshold value.}
\end{figure*}

{\it Discussion :}
These observations are made for the specific case of the simulations - utilities are all realisations of i.i.d. uniform $(0,1)$ random variables. These utilities are re-sampled for each realisation of each trial, i.e. the utilities are different for every sample and the multiple trials for each simulation each use different (and independent) utilities. 

\begin{itemize}
\item For a fixed number of leader actions, it appears that as the number of follower actions is increased, the probability of satisfaction increases. 
\item As argued above, for satisfaction threshold $U_-^f$ equal to zero or one, the leader utility attains the maximum average value possible as in this case, the leader simply chooses the pure action pair $(s^\ell,s^f)$ that maximises its utility. These values are shown by the green lines.
\item For a fixed number of leader actions, it appears that as the number of follower actions is increased, the leader utility attained by the standard Stackelberg game increases (red lines) as does the leader utility for the satisfaction version (for all satisfaction thresholds). 
\item It appears that the leader utility as a function of the satisfaction threshold is a convex function with a unique minimum and that this minimum appears to be attained at a higher satisfaction threshold as the number of follower action increases. 
\item The average leader utility for the satisfaction game appears to be always larger than that attained in the standard Stackelberg game with the same utilities. This is a significant result if true (see Proposition 1, sec. \ref{ssec:mult_LP}). 
\item When the number of follower actions is fixed, and the number of leader actions increases, it appears that the follower satisfaction probability decreases slightly whilst the leader utilities for the satisfaction game remain {\it relatively} similar compared to the maximum utility possible.  
\end{itemize}

\subsection{Solution via a Single Linear Program\label{ssec:single_LP}}

In \cite{Con11}, it's shown that a single LP can be used to find the optimal strategy to commit to. More specifically, the LP is, for the standard problem,
\begin{equation*}
\max_{\pi \in \Delta(S^\ell \times S^f)} \ \Er_\pi \ U^\ell(.,.)
\end{equation*}
\begin{align}
& \text{subject\ to\ :} \nonumber \\  
& \forall \, s_f, s_f^\prime \in \Sc^f\ , \sum_{s_\ell \in S^\ell} \ \pi(s_\ell, s_f) \ \lbr U^f(s_\ell,s_f) - U^f(s_\ell, s_f^\prime) \rbr \geq 0 \ ,
\label{eq:1LP1}
\end{align}
together with the usual probability constraints on the (joint) pmf $\pi$. Observe that the inequality constraints in \eqref{eq:1LP1} represent a ``no (conditional) regret'' property for the follower under $\pi$. \footnote{As pointed out in \cite{Con11}, this idea can be extended to multiple followers moving simultaneously, where the followers establish a correlated equilibrium among themselves.} The constraint matrix in \eqref{eq:1LP1} is block diagonal, with each block representing the constraints present in the corresponding (to $s_f$)  one of the LPs \eqref{eq:LP1}. As proven in \cite{Con11}, this LP yields the identical solution to the multiple LP approach, and some benchmarking examples provided in \cite{Con11} are used to argue that, at least using some LP implementations, the single LP approach can be more computationally efficient. The multiple LP approach yields more information (e.g. the best mixed action for the leader given each (pure) follower action) but the single LP formulation allows the ``no regret'' interpretation for the follower. 
 \section{Regret Based Methods\label{sec:regret}}
 
Regret based methods involve an agent reasoning about how its cumulative utility might have been improved if past actions were changed. These ``regrets'' are then used to define the agents' probabilities of play (mixed actions). There are two cases that we are familiar with -- (i) unconditional (or internal) regrets and (ii) conditional (or external) regrets. Unconditional regrets are associated with {\it Hannan equilibria} (HE) whilst conditional regrets are associated with {\it correlated equilibria} (CE).\\

Correlated equilibria are associated with a particular information structure : there is a ``co-ordinator'' which draws the action profile over all players from some probability distribution $\pi$. This distribution is common knowledge. Each agent receives its drawn action from the co-ordinator as private information. Under these equilibrium conditions, there is no incentive for any agent to unilaterally deviate from its recommended action (measured in terms of its expected utility given $\pi$). This idea is known as {\it Bayesian rationality} in game theory (e.g. \cite{Aum87}). We have developed a theory for Bayesian rationality in general satisfaction games and have successfully applied the regret based algorithms to some resource allocation problems in the satisfaction framework \cite{WRN20}.\\ 

In the Stackelberg model, HE or CE may be valid solution concepts, although, in fact, once the assumption of the follower playing BR is made, there is not really any kind of game (it becomes a single agent optimisation problem). However, the joint distribution on play is common knowledge (the leader advertises its strategy to commit to, and the follower plays a corresponding BR), an information setup which is similar to HE/CE. Thus it is of interest to examine so-called ``regret matching'' approaches (e.g. \cite{Hart00,Hart05}) in the context of the Stackelberg duopoly. We comment at this point about how we see these algorithms being applied. We can consider a {\it repeated game} where the game structure remains the same at each stage, and repetition is used to adapt the probabilities of play so that the joint empirical distribution of all players' actions converges to HE/CE as applicable. In the second case, we can use these algorithms {\it in conjunction with a simulator}, to actually compute a HE/CE which can then be used in a ``one-shot'' (or even a repeated) game as an optimal solution.\\ 

The other key point we make is that these regret matching algorithms (which require each agent to have knowledge of all others' actions) can be formulated using ``re-enforcement learning'' concepts. In these approaches, an agent doesn't need any information about the other agents - indeed, an agent is not even ``aware'' it is part of a competitive multi-agent system. If we consider the learning strategies based on \cite{Letch09}), where an ``oracle'' is assumed to be available which can provide follower BR for any leader {\it mixed} strategy, so that the separating hyperplanes defining each BR region in the set of leader mixed strategies can be estimated. Considerable repetition (along with an assumption of stationarity) would be needed in practice (pure leader action drawn repetitively from the same probability distribution) to realise the algorithm of \cite{Letch09}. Regret matching approaches only require knowledge of the {\it realised} actions, so are more applicable in practice. 

\subsection{Regret Matching}
We firstly define the concepts of HE and CE in a general multi-agent system (normal form game). A joint distribution $\pi \in \Delta(\Sc)$ is a Hannan equilibrium (HE) if for each agent $i$, and each action $s^i \in \Sc^i$ it holds that
\begin{align*}
\Er_\pi \lbr U^i(s^i,.) - U^i(.) \rbr \leq 0 \ .
\end{align*}
Under a HE, no agent can increase its expected utility by choosing some fixed action $s^i$ rather than playing a mixed strategy according to its marginal $\pi^i$. A regret matching algorithm for HE \cite{Hart05} is constructed as follows. Each agent $i$ maintains a vector of estimated (unconditional) regrets via
\begin{align*}
\hat{R}^i_t(k) & = \frac{1}{t} \ \sum_{n=1}^t \ U^i \lb k,s_n^{-i} \rb - U^i \lb s_n^i,s_n^{-i} \rb \ ,
\end{align*}
where $s_n$ is the joint action played at stage $n \geq 1$. Here $k$ takes all values in $\Sc^i$. Agent $i$ then constructs its mixed action to play at stage $t+1$ via
\begin{align}
\hat{\pi}^i_{t+1}(k) & = \left\{ \begin{array}{ll} \displaystyle\frac{[\hat{R}^i_t(k)]_+}{\sum_{\ell \in \Sc^i} [\hat{R}^i_t(\ell)]_+} & k \neq s_t^i \\ 1 - \displaystyle\sum_{j \neq k} \frac{[\hat{R}^i_t(j)]_+}{\sum_{\ell \in \Sc^i} [\hat{R}^i_t(\ell)]_+} & k = s_t^i \ , \end{array} \right.
\label{eq:HE_play_probs}
\end{align}
provided the summation in the denominator expressions in \eqref{eq:HE_play_probs} are positive. The notation $[x]_+ = \max(x,0)$, so this condition means that there remains some positive regret terms. If the summation is zero (all regret terms are non-positive), then the play probabilities can be chosen arbitrarily. The main result (\cite{Hart05}, prop. 4) is that the empirical joint distribution over plays converges to a HE. \footnote{The convergence result is about the empirical distribution, not the actual distribution of plays at each stage. In this case, each agent could play its marginal empirical distribution once its regrets became non-positive, which would result (eventually) in all agents playing a HE subsequently.}\\

A joint distribution $\pi \in \Delta(\Sc)$ is a correlated equilibrium (CE) if for each agent $i$, and each pair of actions $j,k \in \Sc^i$ it holds that
\begin{align*}
\Er_{\pi^{-i}|j} \lbr U^i(k,.) - U^i(j,.) \rbr \leq 0 \ ,
\end{align*}
where the expectation is taken over all other agents' actions wrt the conditional distribution $\pi^{-i}|s^i=j$. 
Under a CE, no agent can increase its expected utility by choosing an alternative action $k$ to the recommended action $j$. A regret matching algorithm for CE \cite{Hart00} is constructed as follows. Each agent $i$ maintains a matrix of estimated (conditional) regrets via
\begin{align}
\hat{R}^i_t(j,k) & = \frac{1}{t} \ \sum_{n=1}^t \ \Ibb \lbr s_n^i = j \rbr \ \lb U^i \lb k,s_n^{-i} \rb - U^i \lb j,s_n^{-i} \rb \rb \ .
\label{eq:regrets_CE}
\end{align}
Agent $i$ then constructs its mixed action to play at stage $t+1$ via
\begin{align}
\hat{\pi}^i_{t+1}(k) & = \displaystyle \left\{ \begin{array}{ll} c \, \left[\hat{R}^i_t(s_t^i,k)\right]_+ & k \neq s_t^i\\ \\  1 - c \, \sum_{j \neq k} \left[\hat{R}^i_t(s_t^i,j)\right]_+ & k = s_t^i \ , \end{array} \right.
\label{eq:CE_play_probs}
\end{align}
where the real constant $c > 0$ is chosen so that there is always a positive probability of playing the same action at stage $t+1$ as was played at stage $t$. \footnote{Evidently, $c$ could be different for different agents and/or stages of play.} The main result (\cite{Hart00}, sec. 2) is that the empirical joint distribution over plays converges to a CE.

\subsubsection{Nash Equilibria}

We discuss the relationship between the Stackelberg duopoly and Nash equilibria for the normal form version of the game (simultaneous play). This is important in the context of the convergence of the regret based methods discussed later in this section.\\

The strategic equilibrium concept in games due to Nash is well-known, at least in its pure action form. For a two-player game such as the Stackelberg model but with simultaneous play, a pure joint action $(s_*^\ell,s_*^f)$ is a (pure) Nash equilibrium (NE) if if holds that 
\begin{align}
U^\ell(s^\ell,s_*^f) & \leq U^\ell(s_*^\ell,s_*^f) \ \forall \ s^\ell \in \Sc^\ell  , \text{\ and} \nonumber \\
U^f(s_*^\ell,s^f) & \leq U^f(s_*^\ell,s_*^f) \ \forall \ s^f \in \Sc^f \ .
\label{eq:pure_NE}
\end{align}
We can see from \eqref{eq:pure_NE}, that $s_*^f$ is a follower best-response to the leader pure action $s_*^\ell$. Similarly, $s_*^\ell$ is a leader best-response to the follower pure action $s_*^f$. Thus if an action pair $(s_*^\ell,s_*^f)$ represents the optimal pure strategy to commit to for each player (moving first) and the corresponding BR for the other player (moving second), then $(s_*^\ell,s_*^f)$ is a pure NE of the simultaneous move game \cite{Con06}. However, for pure strategies, committing to a strategy in advance is not necessarily advantageous, whereas for mixed strategies, commitment (and the ability to force BR for the follower) never reduces leader utility \cite{Con06}. \\

More generally \cite{OsRub94}, a (mixed) Nash equilibrium (NE) is a set of probability distributions $\pi^i \in \Delta(\Sc^i)$, one for each agent $i$, such that, for all actions $j,k \in \Sc^i$,
\begin{align}
\pi^i(j) \ \lb \Er \lbr U^i(k,.) \rbr - \Er \lbr U^i(j,.) \rbr \rb \leq 0 \ ,
\label{eq:mixed_NE}
\end{align}
where the expectation is over the other agents actions via the distribution $\prod_{i^\prime \neq i} \pi^{i^\prime}$. We can thus see that NE correspond to points in CE where the joint distribution on play factors into the product of its marginals. This means that, within the CE framework, NE arise under independent actions over agents. From \eqref{eq:mixed_NE}, we see that, for any NE, no positive probability is placed on actions which don't correspond to those maximising the expected utility for that agent (given the NE distributions for the other agents).\\

So in terms of the problem of computing an optimal leader strategy to commit to, generally NE is not a useful optimality concept. This is a problem for the application of some RM algorithms which tend to converge to pure NE when they exist~\cite{NWN23}. We examine this issue in the following sections. 

\subsubsection{Convergence of regret based methods \label{sssec:RM_conv}}

The convergence of the regret matching algorithms described above is in terms of ``approachability'' to the relevant sets (HE or CE). These sets are convex polytopes which contain any pure NE (if they exist) as extreme points. If these algorithms do indeed converge to a point, it is a pure NE \cite{Hart00,Hart05}.  Whilst it is generally true that commitment to a mixed strategy improves the leader's utility, this is not necessarily the case for commitment to pure strategies \cite{Con06}. So from the point of view of the Stackelberg game, NE of the corresponding simultaneous play game may not be useful. This calls into question the applicability of the regret matching ideas (or even more likely, the notion of any equilibria be it HE or CE) if convergence is generally to pure NE. We'll illustrate this with an example in sec. \ref{ssec:examples2}. 

\subsection{Regrets in a Stackelberg Duopoly}

Given the nature of the Stackelberg game (leader commits to a strategy in advance knowing follower plays BR), the problem of determining the optimal leader strategy to commit to is essentially a single agent optimisation problem, provided the follower BR function {\it over leader mixed strategies} is known by the leader. Indeed, the single LP formulation of sec. \ref{ssec:single_LP} confirms this. The structure of this LP indicates that solutions have the ``no positive regrets'' property for the follower. So one possible approach is to use RM for the follower and use the fact that the property of ``universal consistency'' \footnote{Universal consistency means that any player that plays a RM based strategy will have all its regrets becoming non-positive irrespective (within reason) of what the other players do. This was proven for unconditional regrets in \cite{Hart00} (appealing to Blackwell's approachability theorem) and in \cite{NNW17} for conditional regrets.} should guarantee convergence to a ``no-regret'' solution for the follower. However, in the Stackelberg duopoly, the leader's optimal policy to commit to does not necessarily correspond to a ``no-regret'' state \cite{Con11}. Also, we don't generally assume that the leader knows the follower utilities, so it can't directly reason about follower regrets. This is another area for future work. \\

In a general normal form game, RM requires each agent to know the realised actions of all other agents in order to reason about its regrets. In \cite{Hart01}, a RM procedure which does not require an agent to know the other agents' realised actions is presented. This procedure (RM-RL) uses re-enforcement learning ideas to allow an agent to reason about its (conditional) regrets based only on its own past actions and utilities. The application of these methods to the problem at hand is also a topic for future work

\subsection{Examples\label{ssec:examples2}}

In order to demonstrate some behaviours of the RM algorithms, we used an example with i.i.d. uniform $(0,1)$ utilities for the leader and follower (fixed for all examples). There were $N^\ell = 10$ and $N^f = 7$ actions and the realised utilities are listed in appendix A. There are two pure NE for this example : $s^\ell = 9, s^f = 4$, and $s^\ell = 4, s^f = 5$. 

\subsubsection{Unconditional RM}

In the first example, we considered the unconditional (Hannan) RM where the leader has access to the follower BR function (but only for pure leader strategies). Only leader regrets are relevant to this example. The initial leader distribution on actions was uniform. Convergence was determined to occur when the sum of positive regrets (SPR) reduced below $10^{-5}$, then the leader distribution of plays switched to uniform. Recall, that in this RM algorithm, the players may choose any distribution on their plays once convergence to HE is attained. In this example, the optimal leader strategy to commit to is a mixed action $\pi_\ell(2) = 0.8158, \pi_\ell(3) = 0.1842$, with all other values of $\pi_\ell(i), i \notin \{2,3\}$ equal to zero. The corresponding follower BR is action $s^f = 2$. The corresponding utilities are $U^\ell = 0.9681, U^f = 0.7856$. So for this mixed optimal leader strategy to commit to, leader payoff is higher than that attained for either pure NE. \\

Due to randomisation in the selection of actions, different convergence behaviour can be observed for different realisations. Typically, we observe the following :

\begin{itemize}
\item Convergence to either of the two pure NE. Figure \ref{fig:URM1} shows the convergence behaviour of the RM algorithm when convergence is to the NE $(9,4)$. Observe in this case, both regrets go to zero indicating the limit is a  (pure) Hannan NE which corresponds to a pure NE. 
\item Figure \ref{fig:URM2} shows leader and follower SPR when convergence is to the other NE at $(4,5)$. Observe that the follower SPR does not converge to zero even though the limit corresponds to a NE (see comment below).
\item SPR fails to converge to zero for both leader and follower. Typically, in these cases, we get mixed actions for each player with sub-optimal utilities. Results are shown in fig. \ref{fig:URM3}. So the property of universal consistency is not observed, even for the leader.
\end{itemize}

{\it Comment :}
Because we don't use RM to set the play probabilities for the follower (it plays BR), there is no guarantee of convergence of this algorithm. 

\subsubsection{Conditional RM}

We applied the conditional RM algorithm to the case where only leader regrets are used (the follower always plays BR). In this case, over many different realisations, we only observed convergence to one of the pure NE. In all cases, the SPR for both the leader and the follower go to zero. \\

In the case where we applied the standard RM algorithm to both players (no BR function is used), we generally see convergence to the pure NE, however, for some realisations, we see convergence to ``near'' the optimal leader strategy to commit to. Such a situation is shown in fig. \ref{fig:CRM1}. The attained leader utility in this example is 0.9460, and follower utility is 0.7720, only slightly less in each case than the Stackelberg solution. This result gives some indication that (i) optimal leader strategies to commit to (and corresponding follower BR) may be CE (in some cases, although not in general e.g. \cite{Con11}), and as such, (ii) it may be possible to ``engineer'' the conditional RM algorithm to ``encourage'' it to converge to the Stackelberg solution. This is an important area for future research.

\begin{figure*}[!t] 
        \centering
        \begin{subfigure}[b]{0.5\linewidth}
                \includegraphics[scale=0.30]{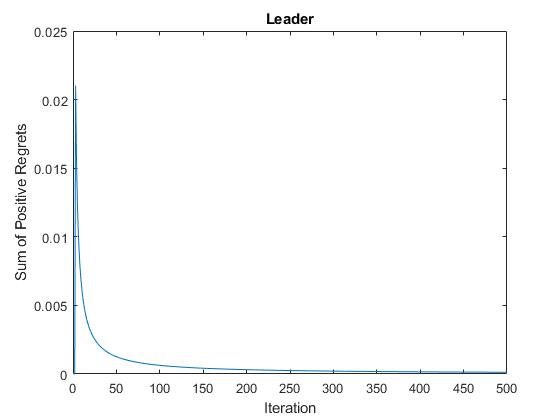}
                \includegraphics[scale=0.30]{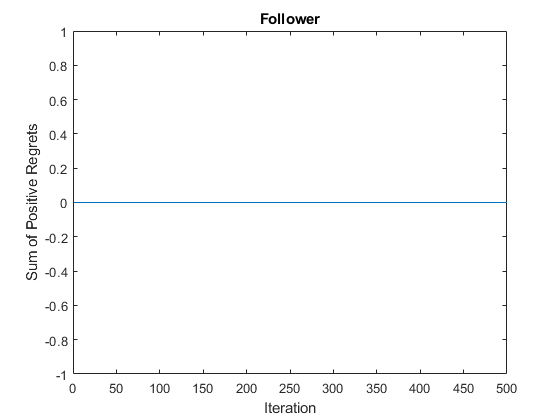}
                \caption{In this case, convergence is to the pure NE at $(9,4)$. \label{fig:URM1}}
        \end{subfigure}%
        \begin{subfigure}[b]{0.5\linewidth}
                \includegraphics[scale=0.30]{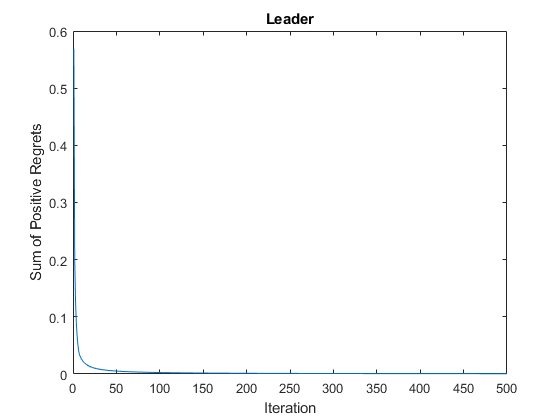}
                \includegraphics[scale=0.30]{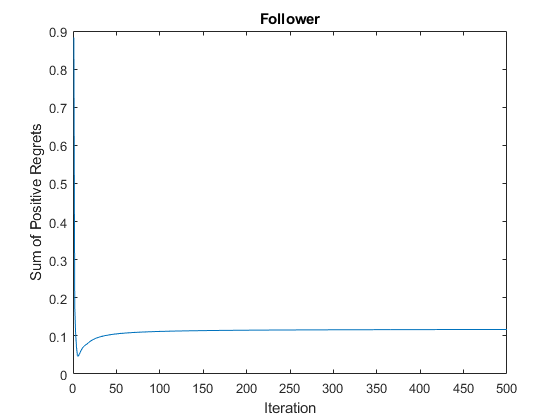}
                \caption[width=0.5\linewidth]{In this case, convergence is to the pure NE at $(4,5)$. \label{fig:URM2}}
        \end{subfigure}%
        \caption{Shows the sum of positive regrets (SPR) for the leader (left) and follower (right).}
\end{figure*}

\begin{figure*}[!t] 
        \centering
        \begin{subfigure}[b]{0.5\linewidth}
                \includegraphics[scale=0.3]{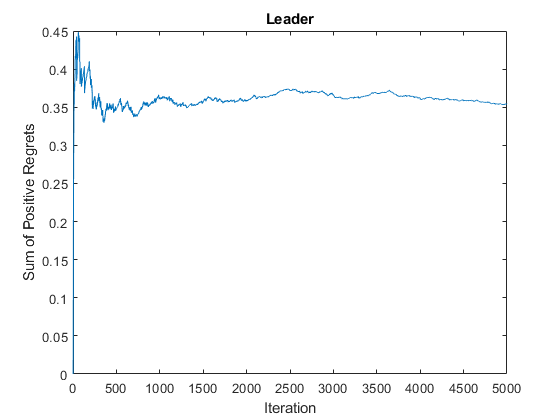}
                \includegraphics[scale=0.3]{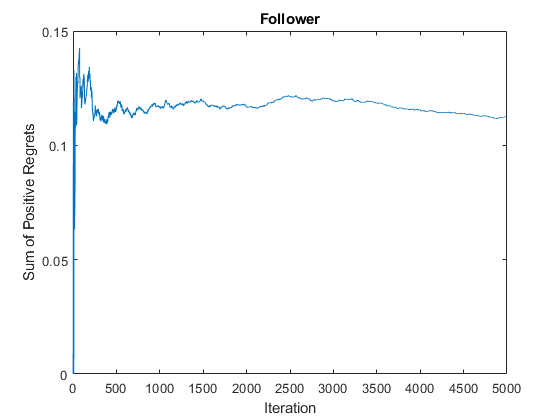}
                \includegraphics[scale=0.3]{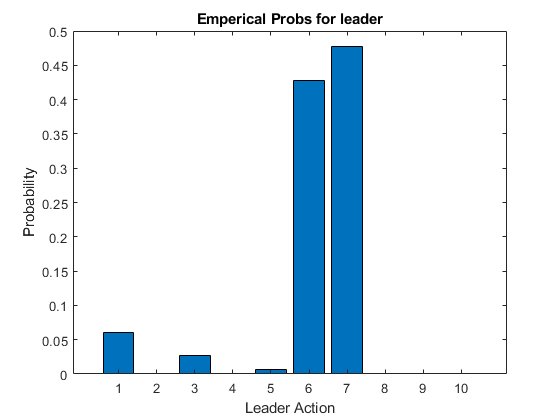}
                \includegraphics[scale=0.3]{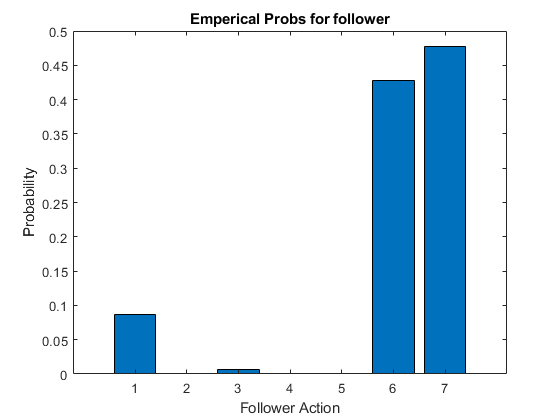}
                \caption{In this case, the algorithm fails to converge to zero of the\\ sum of positive regrets. \label{fig:URM3}}
        \end{subfigure}%
        \begin{subfigure}[b]{0.5\linewidth}
                \includegraphics[scale=0.3]{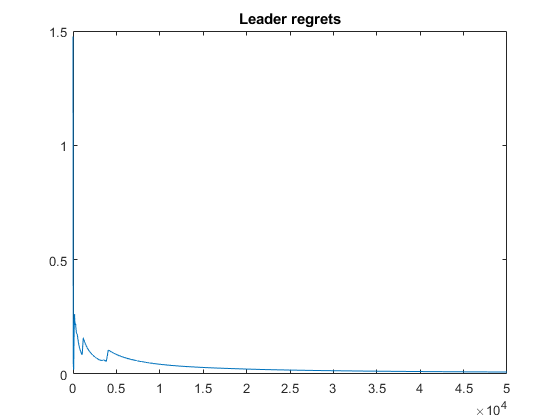}
                \includegraphics[scale=0.3]{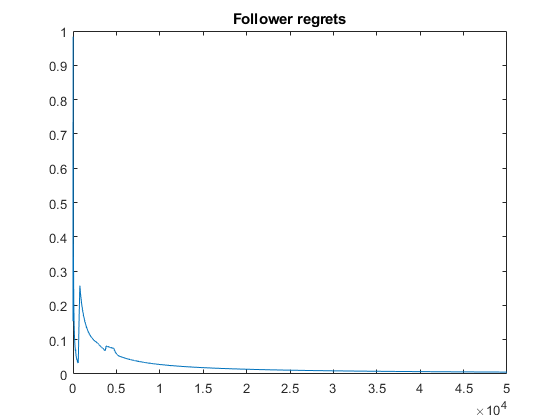}
                \includegraphics[scale=0.3]{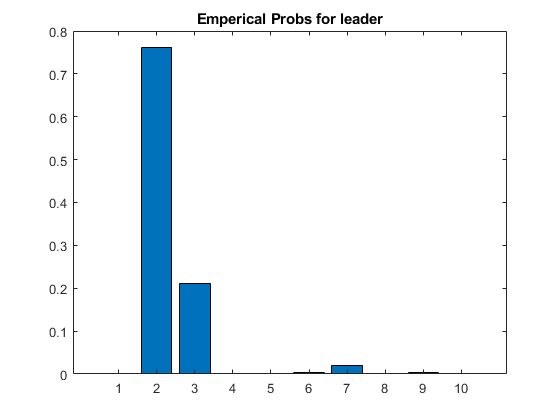}
                \includegraphics[scale=0.3]{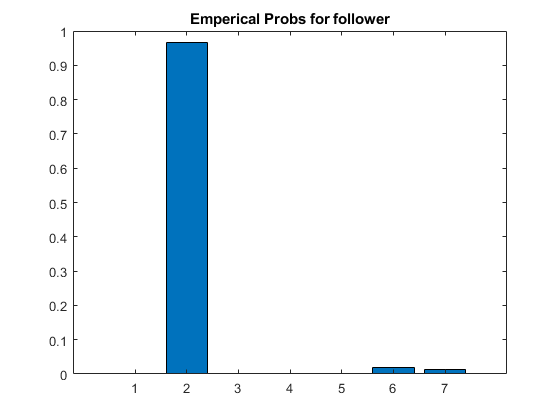}
                \caption[width=0.5\linewidth]{In this case, the algorithm converges towards a CE which appears ``close'' to the optimal leader strategy to commit to (and follower BR). \label{fig:CRM1}}
        \end{subfigure}%
        \caption{Shows the sum of positive regrets (SPR) for the leader (top left in each subplot) and follower (top right in each subplot). The marginal empirical distributions for the leader (bottom left in each subplot) and follower (bottom right in each subplot) are also shown.}
\end{figure*}

\subsection{Regrets in Satisfaction Games}

In \cite{WRN20}, a new paradigm in game theory - Bayesian satisfaction is introduced. This approach combines the idea of Bayesian satisfaction due to Aumann \cite{Aum87} and satisfaction games. Satisfaction equilibria in these games correspond to joint distributions on actions whereby each agent maximises its probability of satisfaction. Examples comparing standard and satisfaction games are included in \cite{WRN20}, where it was shown that, in a radio resource allocation problem, the satisfaction formulation can yield globally more efficient solutions than the normal form game. This is primarily due to agents not claiming resource in excess of their minimum requirement. In \cite{WRN20} it was also shown that RM algorithms can perform significantly better than other algorithms for finding satisfaction equilibria. The consideration of regret based algorithms to Stackelberg satisfaction games remains ongoing work. 
\section{Conclusion and Ongoing Work}

This paper has focused on the study of Stackelberg duopolies within the framework of satisfaction games and on regret-based solution methods for (general) Stackelberg games. The idea of satisfaction in Stackelberg games is introduced and the optimal solution is characterised in terms of probability of satisfaction. For a simple case of i.i.d. uniform $(0,1)$ distributed utilities, we have compared leader optimal utility and follower probability of satisfaction as the satisfaction threshold is varied. We conjecture that if the follower adopts satisfaction, then the optimal (Stackelberg) leader utility is higher. We prove this result for the case of leader commitment to pure strategies.\\

We consider several regret based algorithms for the Stackelberg duopoly. One class of algorithms simply aims to minimise the sum of positive regrets (SPR) for the leader ; the follower plays BR and the leader has access to the BR function {\it for pure leader strategies}. Thus, this is essentially a single agent problem. Simulations suggest that regret matching (RM) algorithms (which are not guaranteed to converge in this case) tend to find pure Nash equilibria for the corresponding simultaneous-move game. These are generally sub-optimal for the leader as a strategy to commit to. However, we did observe convergence to a point ``near'' the Stackelberg solution for conditional RM when regrets were minimised for both players. These observations are likely to be highly problem dependent. \\

%------------------------------------------------------------------------------
\section{Acknowledgements}

The authors acknowledge the support of Defence Science and Technology Group Australia (DST) through the Modelling Complex Warfighting (MCW) strategic research initiative. The author thanks Dr Darryn Reid, Dr Simon Ellis-Steinborner and Dr Glennn Moy from DST for their guidance and insights, and Dr George Stamatescu (The University of Adelaide) for his contributions to the project. 

%---------------------------------------------------------------------------------------------------------
% \newpage

%---------------------------------------------------------------------------------------------------------
\newpage
\appendix
\section{Data}
 
In this appendix, we list the leader and follower utility functions for the examples presented in sec. \ref{ssec:examples2}. This is for reasons of reproducability. In these examples, we used $N^\ell = 10, N^f = 7$. These quantities are i.i.d. realisations of a uniform $(0,1)$ random variable (generated using matlab's \texttt{rand} function). The rows represent leader actions and the columns, follower actions. \\

{\it Leader Utilities :}\\

\begin{tabular}{ccccccc}
    0.8147    & 0.1576    & 0.6557    & 0.7060    & 0.4387    & 0.2760    & 0.7513\\
    0.9058    & 0.9706    & 0.0357    & 0.0318    & 0.3816    & 0.6797     & 0.2551\\
    0.1270    & 0.9572    & 0.8491    & 0.2769    & 0.7655    & 0.6551    & 0.5060\\
    0.9134    & 0.4854    & 0.9340    & 0.0462    & \bcg 0.7952 \ec   & 0.1626    & 0.6991\\
    0.6324    & 0.8003    & 0.6787    & 0.0971   &  0.1869    & 0.1190    & 0.8909\\
    0.0975   &  0.1419    & 0.7577   &  0.8235    & 0.4898   &  0.4984   &  0.9593\\
    0.2785   &  0.4218    & 0.7431   &  0.6948   &  0.4456   &  0.9597   &  0.5472\\
    0.5469    & 0.9157    & 0.3922   &  0.3171    & 0.6463    & 0.3404    & 0.1386\\
    0.9575   &  0.7922    & 0.6555    & \bcb 0.9502 \ec   &  0.7094    & 0.5853    & 0.1493\\
    0.9649    & 0.9595    & 0.1712    & 0.0344    & 0.7547   &  0.2238    & 0.2575\\
\end{tabular}
 
\vspace{5mm}

{\it Follower Utilities :}\\

\begin{tabular}{ccccccc}
    0.8407    & 0.3517    & 0.0759    & 0.1622    & 0.4505    & 0.1067    & 0.4314\\
    0.2543    & 0.8308    & 0.0540    & 0.7943    & 0.0838    & 0.9619    & 0.9106\\
    0.8143    & 0.5853    & 0.5308    & 0.3112    & 0.2290    & 0.0046    & 0.1818\\
    0.2435    & 0.5497    & 0.7792    & 0.5285    & \bcg 0.9133 \ec   & 0.7749    & 0.2638\\
    0.9293    & 0.9172    & 0.9340    & 0.1656   &  0.1524    & 0.8173    & 0.1455\\
    0.3500    & 0.2858    & 0.1299    & 0.6020    & 0.8258    & 0.8687    & 0.1361\\
    0.1966    & 0.7572    & 0.5688    & 0.2630    & 0.5383    & 0.0844    & 0.8693\\
    0.2511    & 0.7537    & 0.4694    & 0.6541    & 0.9961    & 0.3998    & 0.5797\\
    0.6160    & 0.3804    & 0.0119    & \bcb 0.6892 \ec   & 0.0782    & 0.2599    & 0.5499\\
    0.4733    & 0.5678    & 0.3371    & 0.7482    & 0.4427    & 0.8001    & 0.1450\\
\end{tabular}
  
\vspace{5mm}
The pair of actions $s^\ell = 9, s^f = 4$ (marked blue) is a pure NE as is the pair $s^\ell = 4, s^f = 5$ (marked green). There are no mixed NE. 

%-------------------------------------------------------------
\end{document}